\newtheorem{thm}{Theorem}[section]
\newtheorem{lem}[thm]{Lemma}
\newtheorem{prop}[thm]{Proposition}
\theoremstyle{definition}
\newtheorem{defn}[thm]{Definition}
\theoremstyle{remark}
\numberwithin{equation}{section}
\begin{document}

\title{\bfseries\textrm{An Entropy-based Measure of Intelligence Degree of System Structures*}
\footnotetext{*The idea of this paper originates partly from the author's unpublished study under the guidance of professor Lei Guo. This research is supported by the Fundamental Research Funds for
the Central Universities under Grant 2021RC267.\\
\,\,Wei Su is with School of Mathematics and Statistics, Beijing Jiaotong University, Beijing 100044, China, {\tt su.wei@bjtu.edu.cn}.
}
 }
\author{Wei Su}%
\date{}%
\maketitle
\begin{abstract}
In this paper, we investigate how to measure the intelligence of systems under specific structures. Two indicators are adopted to characterize the intelligence of a given structure, namely the function diversity of the structure, and the ability to generate order under specific environments. A measure of intelligence degree is proposed, with which the intelligence degree of several basic structures is calculated. It is shown that some structures are indeed ``smarter'' than the others under the proposed measure. The results add a possible way of revealing the evolution mechanism of natural life and constructing life-like structures with high intelligence degree.
\end{abstract}
\section{Introduction}
Who is smarter between a rock and an E. coli? The answer seems obvious, E. coli is smarter than rocks. However, this conclusion comes from intuition. To answer this question rigorously, we need to be able to quantitatively measure the intelligence of an object. This paper mainly studies the measurement of system intelligence.

The intelligent behavior of a system can be attributed to the intelligence of its functions. In this paper, when we say that a system has intelligent functions, we mean that the system has two characteristics. First, the function of the system has possibilities of evolution, which first means that the system has a high functional reserve. For different states, the system can produce different macro outputs (that is, functions). Second, the system is capable of generating purposeful order; that is, in a specific environment, the system can achieve specific functions.

Considering the correspondence between function and structure, the above two characteristics of the intelligent function of a system can be transformed into the same two characteristics of the system structure. The question is how can the intelligence of a given structure be measured? Corresponding to the above two characteristics, we will consider two indicators of the intelligence of system structure: first, the function diversity of a system structure, which means the evolution possibility of system function; second, the ability of system structure to generate order. To measure the intelligence of a given structure, we need to calculate the above two indicators. The first indicator, we define the concept of \emph{functional entropy}, which is used to measure the diversity of the reserve functions of a system structure, that is, the evolutionary potential of system functions; for the second indicator, namely the ability of attaining order under specific environment, we consider the thermal motion of particles as targeted environment since this is the most fundamental environment of life in nature. And we measure the ability of a system to generate order through the \emph{Boltzmann entropy} by adding the system structure to a closed particle system with thermal motion. Of course, we can also measure the ability of a system structure to attain different orders by putting them in different environments, while in this paper we only consider the environment of random thermal motion.

In the following, we first introduce the two types of entropy, then give the formula of intelligence degree of a given structure, and finally measure the intelligence of several basic structures.

\section{Definitions of two types of entropy}
%

\subsection{Functional entropy}
In this part, we define functional entropy to measure the diversity of macro outputs produced by system for different initial states.
Consider a dynamical system $(X,\mathcal{U},\mu,F,\mathbb{Z}^+)$, where $(X,\mathcal{U},\mu)$ is a probability space, $\mathcal{U}$ is $\sigma$-algebra, $\mu$ is a probability measure, $F:X\rightarrow X$ is a measurable transformation of $X$ which represents the structure of the evolutionary system, and $\mathbb{Z}^+$ is the set of positive integer  for $F$ iterating on $X$. Let $\mathcal{P}$ be a finite measurable  partition of $X$, and $ F^{-1}(\mathcal{P})=\{A|A=F^{-1}(B),B\in\mathcal{P}\}, F^{-1}(B)=\{x\in X|F(x)\in B\}$, then $F^{-1}(\mathcal{P})$ is also a finite measurable partition of $X$.
\begin{defn}\label{Def:funcentropybasic}
Denote $F^{-t}(\mathcal{P})=F^{-1}(F^{-(t-1)}(\mathcal{P})), t\in \mathbb{Z}^+ $, then
\begin{equation}\label{Equa:funcentroy}
  h_\mu(F)=\sup\limits_{\mathcal{P}}\lim\limits_{t\rightarrow\infty}-\sum\limits_{A\in F^{-t}(\mathcal{P})}\mu(A)\ln\mu(A)
\end{equation}
is called functional entropy of the structure $F$ under measure $\mu$.
\end{defn}

The larger the functional entropy $h_\mu(F)$ is, the larger the output diversity of $F$ is. Compared to topological entropy, $X$ is not required to be compact, and compared to measure entropy, $F$ is not required to preserve the measure. Hence Definition \ref{Def:funcentropybasic} of functional entropy can be applied to a wider range of dynamical systems.

For a given structure, functional entropy may be different under different measure $\mu$. At the same time, according to actual situation, if the macro structure of partition is determined in advance, a specific type of functional entropy can also be defined. Here we introduce two special types of functional entropy.

\begin{defn}\label{Def:funcentropyinfspace}
Let $\mu$ be a probability measure equivalent to Lebesgue measure, $D_r=\{x\in \mathbb{R}^n| |x_i| \leq \frac{r}{2}\}$ be a n-dimensional cube of side length $r>0$ in $\mathbb{R}^n$, and $\mathcal{P}^r_C=\mathbb{R}^n-D_r$. Given $\epsilon>0$, consider a net partition $\mathcal{P}^\epsilon_{D_r}$ of $D_r$ with side length $\epsilon$, then $\mathcal{P}^r=\mathcal{P}^\epsilon_{D_r}\bigcup \mathcal{P}^r_C$ is a finite partition of $\mathbb{R}^n$. Define
\begin{equation}\label{Equa:standentroyepsi}
  h_\epsilon(F)=\lim\limits_{r\rightarrow\infty}\lim\limits_{t\rightarrow\infty}-\frac{1}{\ln|\mathcal{P}^r|}\sum\limits_{A\in F^{-t}(\mathcal{P}^r)}\mu(A)\ln\mu(A),
\end{equation}
then $h_\epsilon(F)$ is called $\epsilon$-standard functional entropy of the structure $F$. Here, $|\cdot|$ denotes the cardinality of a set.
\end{defn}
In Definition \ref{Def:funcentropyinfspace}, the state space is the whole space. Since $\mu$ is a probability measure, it is not uniform in the whole space, so it is not suitable to compare the functional entropy of different structures. The following is a case where the state space is bounded and closed.
\begin{defn}\label{Def:funcentropyboundspace}
Let $\mu$ be a Lebesgue measure on the unit n-cube $D_1$, and $\mathcal{P}_\epsilon$ be a uniform net partition with a given side length $\epsilon>0$ of $D_1$. Define
\begin{equation}\label{Equa:standentroy}
  h(F)=\lim\limits_{\epsilon\rightarrow 0}\lim\limits_{t\rightarrow\infty}-\frac{1}{\ln|\mathcal{P}_\epsilon|}\sum\limits_{A\in F^{-t}(\mathcal{P}_\epsilon)}\mu(A)\ln\mu(A),
\end{equation}
then $h(F)$ is called standard functional entropy of the structure $F$.
\end{defn}
Standard functional entropy $h(F)$ is based on a given probability measure with a uniform distribution function and uniform partition, which can be used to compare the functional entropy of different structures.
Since the systems in reality are mainly considered in bounded real space, we mainly use the standard functional entropy defined in
(\ref{Equa:standentroy}) to study the intelligence degree of a given structure in this paper.

For the standard functional entropy defined in (\ref{Equa:standentroy}), we have
\begin{thm}\label{Thm:funcentropy}
Let $\mu$ be a probability measure, then $h(F)\in[0,1]$.
\end{thm}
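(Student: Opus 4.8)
The plan is to bound the quantity inside the double limit uniformly in both $t$ and $\epsilon$, and then pass to the limit using that $[0,1]$ is closed. Throughout, write $H(\mathcal{Q}) = -\sum_{A\in\mathcal{Q}}\mu(A)\ln\mu(A)$ for the Shannon entropy of a finite measurable partition $\mathcal{Q}$, with the usual convention $0\ln 0 = 0$. The whole argument reduces to two classical facts about $H$ together with one combinatorial observation about the cardinality of $F^{-t}(\mathcal{P}_\epsilon)$.

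First I would establish the two-sided bound $0 \le H(\mathcal{Q}) \le \ln|\mathcal{Q}|$ for every finite partition $\mathcal{Q}$. The lower bound is immediate: since $\mu$ is a probability measure, $\mu(A)\in[0,1]$, so each summand $-\mu(A)\ln\mu(A)$ is nonnegative. The upper bound is the Gibbs--Jensen inequality: by concavity of the logarithm,
$$H(\mathcal{Q}) = \sum_{A\in\mathcal{Q}}\mu(A)\ln\frac{1}{\mu(A)} \le \ln\Big(\sum_{A\in\mathcal{Q}}\mu(A)\frac{1}{\mu(A)}\Big) = \ln|\mathcal{Q}|,$$
where the right-hand sum counts the cells of positive measure.

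The crucial observation is that the iterated preimage does not increase the number of cells. Since $\mathcal{P}_\epsilon=\{B_1,\dots,B_N\}$ partitions $D_1$ and $F$ is single-valued, the sets $F^{-t}(B_1),\dots,F^{-t}(B_N)$ are pairwise disjoint and cover $D_1$: if $F^t(x)\in B_i\cap B_j$ then $i=j$, and every $x$ lands in some $B_i$. Hence $F^{-t}(\mathcal{P}_\epsilon)=\{F^{-t}(B_i)\}$ has at most $N=|\mathcal{P}_\epsilon|$ nonempty members, so $|F^{-t}(\mathcal{P}_\epsilon)|\le|\mathcal{P}_\epsilon|$ for every $t$. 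This is the conceptual heart of the argument and the step most easily mishandled: the definition here uses the pure preimage $F^{-t}(\mathcal{P})$ rather than the refining join $\bigvee_{i<t}F^{-i}(\mathcal{P})$ of Kolmogorov--Sinai theory, so the cardinality stays bounded by $N$ instead of growing like $N^t$, and the normalization by $\ln|\mathcal{P}_\epsilon|$ is calibrated precisely to this fact.

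Combining the three facts, for every $\epsilon\in(0,1)$ and every $t$,
$$0 \;\le\; -\frac{1}{\ln|\mathcal{P}_\epsilon|}\sum_{A\in F^{-t}(\mathcal{P}_\epsilon)}\mu(A)\ln\mu(A) \;=\; \frac{H\big(F^{-t}(\mathcal{P}_\epsilon)\big)}{\ln|\mathcal{P}_\epsilon|} \;\le\; \frac{\ln|F^{-t}(\mathcal{P}_\epsilon)|}{\ln|\mathcal{P}_\epsilon|} \;\le\; 1,$$
where the denominator $\ln|\mathcal{P}_\epsilon|$ is positive because $|\mathcal{P}_\epsilon|>1$ for $\epsilon<1$. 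Since this bound is uniform in $t$, the inner limit $t\to\infty$ (assumed to exist in the definition of $h(F)$) lies in $[0,1]$; and because $[0,1]$ is closed, the outer limit $\epsilon\to 0$ remains in $[0,1]$, which is exactly $h(F)\in[0,1]$.
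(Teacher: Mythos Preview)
Your proof is correct and follows essentially the same approach as the paper: both arguments hinge on the cardinality bound $|F^{-t}(\mathcal{P}_\epsilon)|\le|\mathcal{P}_\epsilon|$ together with the entropy bound $H(\mathcal{Q})\le\ln|\mathcal{Q}|$ (the paper cites this as a lemma, you derive it via Jensen), and then divide through by $\ln|\mathcal{P}_\epsilon|$ before taking limits. Your version is in fact more complete, since you also justify the lower bound $h(F)\ge 0$ and the passage to limits, both of which the paper leaves implicit.
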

First we need the following lemma:
\begin{lem}\label{Lem:avermaxi}\cite{Tomasz2011}
Let $P_m=(p_1,p_2,\ldots,p_m), m\geq 1$ where $p_i\geq0 (i=1,\ldots, m), \sum_{i=1}^mp_i=1$, and denote the entropy function $H(P_m)=-\sum_{i=1}^m p_i\ln p_i$, then $\sup_{P_m}H(P_m)=\ln m$.
\end{lem}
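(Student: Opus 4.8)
The plan is to establish the sharp upper bound $H(P_m)\le\ln m$ for every admissible probability vector $P_m$, and then to exhibit a single point of the simplex at which equality holds, so that the supremum is in fact attained and equals $\ln m$.

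First I would dispose of the boundary behaviour. Adopting the standard convention $0\ln 0=0$, justified by $\lim_{p\to 0^+}p\ln p=0$, the terms with $p_i=0$ contribute nothing, so I may restrict the sum to the support $S=\{i:p_i>0\}$ and write $H(P_m)=\sum_{i\in S}p_i\ln(1/p_i)$, where $k:=|S|\le m$. This reduction is what lets me apply a strictly-positive-weights argument in the next step without worrying about $\ln 0$.

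The core step is the upper bound, for which I would invoke the concavity of the logarithm via Jensen's inequality. Treating $(p_i)_{i\in S}$ as a probability weighting of the values $1/p_i$, concavity of $\ln$ yields
\[
\sum_{i\in S}p_i\ln\frac{1}{p_i}\;\le\;\ln\!\left(\sum_{i\in S}p_i\cdot\frac{1}{p_i}\right)=\ln k\;\le\;\ln m,
\]
so $H(P_m)\le\ln m$ for all $P_m$. Equivalently one could route through the elementary inequality $\ln t\le t-1$ applied to $t=1/(mp_i)$ and summed against the weights $p_i$, which is the Gibbs-inequality argument and avoids quoting Jensen; either form gives the same bound.

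Finally I would verify attainment: choosing $p_i=1/m$ for every $i$ gives $H=-\sum_{i=1}^m\frac{1}{m}\ln\frac{1}{m}=\ln m$. Combining this value with the upper bound shows $\sup_{P_m}H(P_m)=\ln m$ (indeed a maximum). Since this is an elementary lemma, there is no serious obstacle; the only delicate points are the handling of the support and the $0\ln 0$ convention, after which the bound is a one-line consequence of concavity. A Lagrange-multiplier approach would instead produce the stationary point $p_i=e^{-1-\lambda}$, forcing all $p_i$ equal and hence uniform, but it additionally requires a second-order or concavity check to confirm that this critical point is a maximum; the Jensen (or Gibbs) route is therefore cleaner and I would prefer it.
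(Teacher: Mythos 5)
Your proof is correct and complete: the $0\ln 0$ convention is handled properly, the Jensen step is applied with strictly positive weights, and attainment at the uniform distribution turns the bound into the stated supremum. The paper itself offers no proof of this lemma (it is simply cited from Downarowicz \cite{Tomasz2011}), and your concavity argument is precisely the standard one, so there is nothing to reconcile.
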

\begin{proof}[Proof of Theorem \ref{Thm:funcentropy}]
Given any $\epsilon>0, t>0$, and the partition $\mathcal{P}_\epsilon$, we have
\begin{equation}\label{Equa:FnPrleqnPr}
  |F^{-t}(\mathcal{P}_\epsilon)|\leq |\mathcal{P}_\epsilon|.
\end{equation}
Then by Lemma \ref{Lem:avermaxi},(\ref{Equa:standentroy}) and (\ref{Equa:FnPrleqnPr})
\begin{equation*}
\begin{split}
  h(F)=&\lim\limits_{\epsilon\rightarrow 0}\lim\limits_{t\rightarrow\infty}-\frac{1}{\ln|\mathcal{P}_\epsilon|}\sum\limits_{A\in F^{-t}(\mathcal{P}_\epsilon)}\mu(A)\ln\mu(A)\\
  \leq &\lim\limits_{\epsilon\rightarrow 0}\lim\limits_{t\rightarrow\infty}\frac{1}{\ln|\mathcal{P}_\epsilon|}\ln |F^{-t}(\mathcal{P}_\epsilon)|\\
  \leq&\lim\limits_{\epsilon\rightarrow 0}\lim\limits_{t\rightarrow\infty}\frac{1}{\ln|\mathcal{P}_\epsilon|}\ln |\mathcal{P}_\epsilon|\\
  =&1.
\end{split}
\end{equation*}
This completes the proof.
\end{proof}
\begin{thm}\label{Thm:boundedFentropy}
 For any structure $F$, if $\{ \lim\limits_{t\rightarrow\infty}F^t(x)| x\in  D_1\}$ is a  set consisting of finite points in $ D_1$, then its standard functional entropy $h(F)=0$.
\end{thm}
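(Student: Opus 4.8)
The plan is to exploit the fact that the hypothesis forces every orbit to funnel into one of finitely many points, so that on any fixed partition scale $\epsilon$ only a bounded number of cells can carry mass in the long run. Write $Q=\{q_1,\dots,q_k\}$ for the finite set $\{\lim_{t\to\infty}F^t(x)\mid x\in D_1\}$; the hypothesis guarantees that the limit exists for every $x$ and takes one of these $k$ values. The guiding idea is that the inner limit $\lim_{t\to\infty}-\sum_{A\in F^{-t}(\mathcal{P}_\epsilon)}\mu(A)\ln\mu(A)$ will be bounded by a constant depending only on $n$ and $k$ but \emph{not} on $\epsilon$. Since $|\mathcal{P}_\epsilon|\sim\epsilon^{-n}\to\infty$, and hence $\ln|\mathcal{P}_\epsilon|\to\infty$, as $\epsilon\to0$, dividing by $\ln|\mathcal{P}_\epsilon|$ and letting $\epsilon\to0$ in (\ref{Equa:standentroy}) will force $h(F)=0$.

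First I would fix $\epsilon>0$. Since $D_1$ is bounded, $\mathcal{P}_\epsilon$ is a finite partition, and I would split its cells into two groups: the \emph{active} cells, whose closure meets $Q$, and the \emph{inactive} cells, whose closure is disjoint from $Q$. In a uniform net partition each point of $Q$ lies in the closure of at most $2^n$ cells, so the number of active cells is at most $2^n k$, a bound free of $\epsilon$. For an inactive cell $B$ one has $\mathrm{dist}(\overline{B},Q)=\delta>0$; because $F^t(x)\to q_{j(x)}\in Q$ for every $x$, eventually $\mathrm{dist}(F^t(x),Q)<\delta$ and hence $F^t(x)\notin B$. Thus $\mathbf{1}_{B}\circ F^t\to0$ pointwise on $D_1$, and as these functions are dominated by $1$ on the finite-measure space $D_1$, dominated convergence gives $\mu(F^{-t}(B))=\int_{D_1}\mathbf{1}_{B}(F^t(x))\,d\mu\to0$ as $t\to\infty$.

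Next I would estimate $H_t(\epsilon)=-\sum_{A\in F^{-t}(\mathcal{P}_\epsilon)}\mu(A)\ln\mu(A)$ by splitting the sum over active and inactive cells. The active part involves at most $2^n k$ terms whose total mass is at most $1$, so by Lemma \ref{Lem:avermaxi} together with the elementary bound $-S\ln S\le 1/e$ for $S\in[0,1]$, it is bounded by $\ln(2^n k)+1/e$ for \emph{every} $t$. The inactive part is a finite sum (recall $\epsilon$ is fixed) of terms $-\mu(F^{-t}(B))\ln\mu(F^{-t}(B))$, each of which tends to $0$ as $t\to\infty$ by the previous step and the continuity of $p\mapsto-p\ln p$ at $p=0$; hence the inactive part tends to $0$. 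Consequently $\lim_{t\to\infty}H_t(\epsilon)\le \ln(2^n k)+1/e=:C$, a constant independent of $\epsilon$. Substituting into (\ref{Equa:standentroy}) gives $h(F)\le\lim_{\epsilon\to0}C/\ln|\mathcal{P}_\epsilon|=0$, and since $h(F)\ge0$ by Theorem \ref{Thm:funcentropy}, I conclude $h(F)=0$.

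The step I expect to be most delicate is making the long-time behaviour of the cell measures rigorous in a way that is uniform in $\epsilon$: the convergence $F^t(x)\to q_{j(x)}$ is only pointwise, so one must pass through dominated convergence rather than any uniform-in-$x$ estimate, and one must resist the temptation to claim that each active cell's measure converges—a claim complicated by limit points sitting on cell boundaries, where $\mathbf{1}_{B}\circ F^t$ need not converge at all. The argument above is deliberately arranged to sidestep this by using only a \emph{cardinality} bound on the active cells (via Lemma \ref{Lem:avermaxi}) together with the vanishing of the total inactive mass, so that no convergence of individual active-cell measures is required.
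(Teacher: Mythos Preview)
Your proof is correct and follows the same overall strategy as the paper's: show that the inner entropy $\lim_{t\to\infty}H_t(\epsilon)$ is bounded by a constant independent of $\epsilon$, then let $\ln|\mathcal{P}_\epsilon|\to\infty$ kill it. The paper executes this more tersely, asserting that for any $S\subset D_1\setminus\mathbb{K}$ one has $\lim_{t\to\infty}F^{-t}(S)=\emptyset$, hence $\lim_{t\to\infty}|F^{-t}(\mathcal{P}_\epsilon)|\le M=|\mathbb{K}|$, and then invoking Lemma~\ref{Lem:avermaxi} to bound the entropy by $\ln M$. Your active/inactive decomposition is a genuine refinement of this: you do not claim that inactive preimages are eventually \emph{empty}, which would require a uniformity in $x$ that mere pointwise convergence $F^t(x)\to q_{j(x)}$ does not supply, but only that their \emph{measures} vanish, which dominated convergence does deliver; and you handle limit points landing on cell boundaries via the $2^n k$ count rather than pretending each $q_i$ sits in exactly one cell. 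These refinements buy rigor at the cost of a slightly larger constant ($\ln(2^n k)+1/e$ in place of $\ln k$), which is of course irrelevant after division by $\ln|\mathcal{P}_\epsilon|$.
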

\begin{proof}
Let $\mathbb{K}=\{ \lim\limits_{t\rightarrow\infty}F^t(x)| x\in  D_1\}$,
then there exists $M<\infty$ such that $|\mathbb{K}|=M$ and for any $S\subset  D_1-\mathbb{K}$, we have
\begin{equation}\label{Equa:Fnegtempty}
  \lim\limits_{t\rightarrow\infty}F^{-t}(S)=\emptyset.
\end{equation}
Given $\epsilon>0$ and the uniform net partition $\mathcal{P}_\epsilon$, by (\ref{Equa:Fnegtempty}),
\begin{equation}\label{Equa:Fnegtprbr0}
  \lim\limits_{t\rightarrow\infty}|F^{-t}(\mathcal{P}_\epsilon)|\leq \lim\limits_{t\rightarrow\infty}|F^{-t}(\mathbb{K})|\leq M.
\end{equation}
Notice $|\mathcal{P}_\epsilon|\rightarrow \infty$, as $\epsilon\rightarrow 0$, then by Lemma \ref{Lem:avermaxi},(\ref{Equa:standentroy}) and (\ref{Equa:Fnegtprbr0}), we have
\begin{equation*}
\begin{split}
  h(F)=&\lim\limits_{\epsilon\rightarrow 0}\lim\limits_{t\rightarrow\infty}-\frac{1}{\ln|\mathcal{P}_\epsilon|}\sum\limits_{A\in F^{-t}(\mathcal{P}_\epsilon)}\mu(A)\ln\mu(A)\\
  \leq& \lim\limits_{\epsilon\rightarrow 0}\lim\limits_{t\rightarrow\infty}\frac{1}{\ln|\mathcal{P}_\epsilon|}\ln |F^{-t}(\mathcal{P}_\epsilon)|\\
  \leq& \lim\limits_{\epsilon\rightarrow 0}\frac{1}{\ln|\mathcal{P}_\epsilon|}\ln M\\
  =&0.
\end{split}
\end{equation*}
This completes the proof.
\end{proof}
Next there are standard functional entropy of several systems with specific structures.
\begin{prop}\label{Prop:canstantproj}
Given $x_0\in  D_1$, if $F(x)\equiv x_0, x\in  D_1$, then $h(F)=0$.
\end{prop}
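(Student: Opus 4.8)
The plan is to observe that the constant map satisfies the hypothesis of Theorem \ref{Thm:boundedFentropy} and then invoke that theorem directly. First I would compute the forward orbit: since $F(x)\equiv x_0$, a single application of $F$ already sends every point to $x_0$, and because $F(x_0)=x_0$ we get $F^t(x)=x_0$ for every $t\geq 1$ and every $x\in D_1$. Hence $\lim_{t\to\infty}F^t(x)=x_0$ for all $x$, so the limit set $\{\lim_{t\to\infty}F^t(x)\mid x\in D_1\}=\{x_0\}$ is a single point, i.e.\ a finite set with $M=1$. Theorem \ref{Thm:boundedFentropy} then immediately yields $h(F)=0$.

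Alternatively, for a self-contained argument I would compute the preimage partitions directly. For any measurable set $B$, the preimage $F^{-1}(B)$ equals $D_1$ when $x_0\in B$ and $\emptyset$ otherwise. Applied to the uniform net partition $\mathcal{P}_\epsilon$, exactly one cell contains $x_0$, so $F^{-1}(\mathcal{P}_\epsilon)$ consists of $D_1$ together with empty sets; discarding the empty cells gives $F^{-t}(\mathcal{P}_\epsilon)=\{D_1\}$ for every $t\geq 1$. The entropy sum then collapses to $-\mu(D_1)\ln\mu(D_1)=-1\cdot\ln 1=0$, and after dividing by $\ln|\mathcal{P}_\epsilon|$ and taking the limits in $t$ and $\epsilon$ we again obtain $h(F)=0$.

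There is no substantive obstacle here: the statement is essentially a corollary of Theorem \ref{Thm:boundedFentropy} with the trivial limit set $\{x_0\}$, and the direct computation confirms that the pullback partitions degenerate to the single cell $D_1$. The only minor point to be careful about is the case in which $x_0$ lies on a boundary shared by several cells of $\mathcal{P}_\epsilon$; but such boundaries carry measure zero and the limit $\epsilon\to 0$ is taken regardless, so this edge case does not affect the conclusion $h(F)=0$.
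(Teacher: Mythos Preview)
Your proposal is correct and matches the paper's intended approach: the paper states Proposition~\ref{Prop:canstantproj} without an explicit proof, but its placement immediately after Theorem~\ref{Thm:boundedFentropy} makes clear it is meant as a direct corollary with limit set $\{x_0\}$, which is exactly your primary argument. Your alternative direct computation of the preimage partition is also valid and simply unwinds the same idea by hand.
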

\begin{prop}\label{Prop:indenticalproj}
If $F(x)\equiv x, x\in  D_1$, then $h(F)=1$.
\end{prop}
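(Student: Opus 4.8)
The plan is to exploit the fact that the identity map leaves every partition invariant, so that the iteration limit collapses and the entire computation reduces to evaluating the entropy of the uniform partition under Lebesgue measure. First I would observe that since $F(x)\equiv x$ we have $F^{-1}(B)=B$ for every measurable $B$, hence $F^{-t}(\mathcal{P}_\epsilon)=\mathcal{P}_\epsilon$ for all $t\in\mathbb{Z}^+$. Consequently the inner limit $\lim_{t\to\infty}$ in (\ref{Equa:standentroy}) is trivial, because the summand does not depend on $t$ at all, and
\[
 h(F)=\lim_{\epsilon\to 0}-\frac{1}{\ln|\mathcal{P}_\epsilon|}\sum_{A\in\mathcal{P}_\epsilon}\mu(A)\ln\mu(A).
\]

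Next I would compute the measure of each cell. Because $\mathcal{P}_\epsilon$ is a uniform net partition of the unit $n$-cube $D_1$ with side length $\epsilon$ (taking $\epsilon=1/k$ with $k\in\mathbb{Z}^+$ so that the partition is exact), it consists of $|\mathcal{P}_\epsilon|=\epsilon^{-n}$ congruent subcubes, each of Lebesgue measure $\mu(A)=\epsilon^{n}=1/|\mathcal{P}_\epsilon|$. Writing $m=|\mathcal{P}_\epsilon|$, the cells carry the uniform distribution, which is exactly the configuration realizing the supremum in Lemma \ref{Lem:avermaxi}, so
\[
 -\sum_{A\in\mathcal{P}_\epsilon}\mu(A)\ln\mu(A)=-\sum_{A}\frac{1}{m}\ln\frac{1}{m}=\ln m=\ln|\mathcal{P}_\epsilon|.
\]
Dividing by $\ln|\mathcal{P}_\epsilon|$ then gives exactly $1$ for every admissible $\epsilon$, and the outer limit $\epsilon\to 0$ preserves this value, yielding $h(F)=1$. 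It is worth noting that this represents the extreme opposite of the constant map in Proposition \ref{Prop:canstantproj}: the identity retains all partition cells under preimage and therefore saturates the upper bound established in Theorem \ref{Thm:funcentropy}, whereas the constant map collapses the preimages to a single cell in the spirit of Theorem \ref{Thm:boundedFentropy}.

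I expect no serious obstacle here; the computation is essentially the equality case of the maximum-entropy lemma. The only point requiring minor care is the bookkeeping of the partition cardinality together with the claim that every cell has \emph{exactly} equal measure, since it is precisely this uniformity that forces the entropy to equal $\ln|\mathcal{P}_\epsilon|$ rather than fall strictly below it. If one wished to allow $\epsilon$ not dividing $1$ evenly, the boundary cells would be smaller and the distribution no longer perfectly uniform; I would handle this either by restricting to the sequence $\epsilon=1/k$, $k\to\infty$, or by observing that the total measure carried by the boundary strip is $O(\epsilon)$ and hence its contribution vanishes in the normalized limit, leaving the value $1$ unchanged.
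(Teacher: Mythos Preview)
Your argument is correct. The paper states Proposition~\ref{Prop:indenticalproj} without proof, so there is no authorial proof to compare against; your computation---observing that $F^{-t}(\mathcal{P}_\epsilon)=\mathcal{P}_\epsilon$ for the identity and that the uniform net partition realizes the equality case of Lemma~\ref{Lem:avermaxi}---is exactly the natural verification and is what the authors evidently had in mind given how the result is invoked later (e.g.\ in Section~3.2.1 and in the proof of Proposition~\ref{Prop:funentrHK}).
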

\begin{prop}\label{Prop:feedback}
Consider the system with feedback structure
\begin{equation}\label{Model:feedback}
\dot{x}(t)=Ax(t),\qquad \,x(0)\in  D_1.
\end{equation}
When all eigenvalues of $A$ satisfy $Re(\lambda_i(A))<0$, the system (\ref{Model:feedback}) is called pure negative feedback structure, and denote as $F_-$;
When all $Re(\lambda_i(A))>0$, the system is called pure positive feedback structure, and denote as $F_+$.
Then $h(F_-)=0$,$h(F_+)=0$.
\end{prop}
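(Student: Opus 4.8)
The plan is to view the continuous-time system through its time-one flow map: the solution of (\ref{Model:feedback}) is $x(t)=e^{At}x(0)$, so I would set $F(x)=e^{A}x$, giving $F^{t}(x)=e^{At}x$ and $F^{-t}(x)=e^{-At}x$. The two cases are then governed by the spectrum of $A$, and I would treat them by genuinely different mechanisms even though both land on $h=0$: for $F_-$ the forward flow collapses, for $F_+$ it is the backward (preimage) flow that collapses.

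For the pure negative feedback $F_-$, all $Re(\lambda_i(A))<0$ forces $e^{At}\to 0$, so $\lim_{t\to\infty}F^{t}(x)=0$ for every $x\in D_1$. Hence the forward limit set $\{\lim_{t\to\infty}F^t(x)\mid x\in D_1\}=\{0\}\subset D_1$ is a single point, in particular finite. I would then simply invoke Theorem \ref{Thm:boundedFentropy} with $M=1$ to conclude $h(F_-)=0$.

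For the pure positive feedback $F_+$ the forward orbits of all $x\neq 0$ leave $D_1$ and diverge, so Theorem \ref{Thm:boundedFentropy} does not apply; here I would argue directly from the definition (\ref{Equa:standentroy}), exploiting that the preimage map $F^{-t}=e^{-At}$ is now a contraction, since $-A$ has all eigenvalues in the open left half-plane. Fix $\epsilon>0$ and write $N=|\mathcal{P}_\epsilon|=\epsilon^{-n}$, so each atom of the uniform net partition has measure $\epsilon^{n}=1/N$. Because $\|e^{-At}\|\to 0$, for $t$ large we have $e^{-At}D_1\subseteq D_1$; consequently, for each atom $A$ of $\mathcal{P}_\epsilon$ the preimage $F^{-t}(A)=\{x\in D_1\mid e^{At}x\in A\}=e^{-At}A$ lies entirely inside $D_1$ and, by the Jacobian change-of-variables factor, the corresponding atom $B\in F^{-t}(\mathcal{P}_\epsilon)$ has
\[
  \mu(B)=|\det e^{-At}|\,\mu(A)=\frac{c_t}{N},\qquad c_t:=e^{-t\,\mathrm{tr}(A)}.
\]
Since every $Re(\lambda_i(A))>0$ gives $\mathrm{tr}(A)=\sum_i Re(\lambda_i)>0$, we get $c_t\to 0$ as $t\to\infty$. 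Summing the $N$ equal contributions,
\[
  -\sum_{B\in F^{-t}(\mathcal{P}_\epsilon)}\mu(B)\ln\mu(B)=-c_t\ln\frac{c_t}{N}=-c_t\ln c_t+c_t\ln N,
\]
and both terms tend to $0$ as $t\to\infty$ for fixed $\epsilon$. Dividing by $\ln N$ and then letting $\epsilon\to 0$ yields $h(F_+)=0$.

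The main obstacle is precisely the positive feedback case: since the forward flow escapes the cube, one cannot reuse the clean finite-limit-set argument of Theorem \ref{Thm:boundedFentropy} and must instead control the contracting preimages quantitatively. The delicate points I would verify carefully are (i) that for large $t$ every preimage $e^{-At}A$ is genuinely contained in $D_1$, so that no boundary clipping distorts the Jacobian identity above (this follows from $e^{-At}D_1\to\{0\}$); and (ii) that taking the limit in $t$ \emph{before} the limit in $\epsilon$ is exactly what makes the vanishing factors $c_t\ln c_t$ and $c_t\ln N$ annihilate the entropy, so the order of limits in (\ref{Equa:standentroy}) is essential. I would also remark that the escaping mass—the part of $D_1$ whose forward image has already left $D_1$—simply does not appear among the nonempty atoms, so the reduced total mass $c_t<1$ causes no difficulty in the computation.
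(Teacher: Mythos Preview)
Your argument for $F_-$ is exactly the paper's: the forward limit set is the single point $\{0\}$, so Theorem~\ref{Thm:boundedFentropy} applies directly.

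For $F_+$ your proof is correct but takes a different route from the paper. The paper argues qualitatively: since every nonzero initial condition escapes to infinity, the preimage partition $F^{-t}(\mathcal{P}_\epsilon)$ collapses as $t\to\infty$ to the single null atom $\{0\}$, and then $\mu(\{0\})\ln\mu(\{0\})=0$ gives $h(F_+)=0$ immediately. You instead compute the preimage measures explicitly via the Jacobian $|\det e^{-At}|=e^{-t\,\mathrm{tr}(A)}=c_t$, obtain the closed form $-c_t\ln c_t+c_t\ln N$ for the entropy sum, and let $c_t\to 0$. Your approach is more quantitative and handles more carefully the two points the paper glosses over: that $F_+$ does not map $D_1$ into itself (your ``escaping mass'' remark), and that one is really taking a limit of sums rather than evaluating the entropy on a limiting partition. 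The paper's argument is shorter and captures the geometric picture in one line; yours supplies the analytic justification behind that picture and would extend to give decay rates if needed. Both are valid.
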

\begin{proof}
For pure negative feedback structure $F_-$, it is obvious that $\lim\limits_{t\rightarrow\infty}F_-^{t}(x)=0, x\in  D_1$. By Theorem \ref{Thm:boundedFentropy}, we have $h(F_-)=0$. For the pure positive feedback structure $F_+$, by the theory of differential equations, we know
\begin{equation}\label{Equa:posifeedsolu}
\begin{split}
    \lim\limits_{t\rightarrow\infty}F_+^{t}(x)=& 0,\qquad x=0 \\
    \parallel\lim\limits_{t\rightarrow\infty}F_+^{t}(x)\parallel=& \infty, \qquad x\neq 0.
\end{split}
\end{equation}
Given $\epsilon>0$ and the uniform net partition $\mathcal{P}_\epsilon$, from (\ref{Equa:posifeedsolu})
\begin{equation}\label{Equa:posifeedfnegt}
  \lim\limits_{t\rightarrow\infty}F^{-t}(\mathcal{P}_\epsilon)=\{0\},\qquad 
\end{equation}
Since $\mu$ is Lebesgue measure, then
\begin{equation}\label{Equa:posifeedfnegtprob}
 \mu(\{0\})=0.
\end{equation}
By (\ref{Equa:standentroy}), (\ref{Equa:posifeedfnegt}) and (\ref{Equa:posifeedfnegtprob}), we have
\begin{equation*}
\begin{split}
  h(F)=&\lim\limits_{\epsilon\rightarrow 0}\lim\limits_{t\rightarrow\infty}-\frac{1}{\ln|\mathcal{P}_\epsilon|}\sum\limits_{A\in F^{-t}(\mathcal{P}_\epsilon)}\mu(A)\ln\mu(A)\\
  =& \lim\limits_{\epsilon\rightarrow 0}-\frac{1}{\ln|\mathcal{P}_\epsilon|}\mu(\{0\})\ln\mu(\{0\})\\
  =&0.
\end{split}
\end{equation*}
This completes the proof.
\end{proof}
\begin{prop}\label{Prop:entropylinear}
Consider the discrete linear system
\begin{equation}\label{Model:dislinearmodel}
  x(t+1)=Ax(t),\quad x(0)\in  D_1,\quad t\geq 0,
\end{equation}
and let $\lambda_i(A)$ be the eigenvalues of $A$. If all $|\lambda_i(A)|< 1$ or $|\lambda_i(A)|>1$, then the standard functional entropy $h(F)=0$.
\end{prop}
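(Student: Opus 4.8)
The plan is to split the argument into the two regimes permitted by the hypothesis and reduce each to a result already established in the excerpt. Writing $F(x)=Ax$, we have $F^t(x)=A^tx$, so the asymptotic behaviour of both the forward and backward orbits is controlled entirely by the spectral radius of $A$ (respectively of $A^{-1}$). The two cases are markedly asymmetric: the contracting case reduces to a ready-made theorem, while the expanding case must be rerouted through the backward dynamics exactly as in the pure-positive-feedback argument.

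First I would treat the contracting case where all $|\lambda_i(A)|<1$. Then the spectral radius of $A$ is strictly less than one, so $A^t\to 0$ and hence $\lim_{t\to\infty}F^t(x)=\lim_{t\to\infty}A^tx=0$ for every $x\in D_1$. Thus the forward limit set $\{\lim_{t\to\infty}F^t(x)\mid x\in D_1\}$ equals the singleton $\{0\}\subset D_1$, which consists of a single (hence finite) point. Theorem \ref{Thm:boundedFentropy} then applies verbatim and yields $h(F)=0$.

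Next I would treat the expanding case where all $|\lambda_i(A)|>1$. Here $0$ is not an eigenvalue, so $A$ is invertible, and the eigenvalues of $A^{-1}$ are the reciprocals $1/\lambda_i(A)$, all of modulus strictly less than one; consequently $A^{-t}\to 0$. I cannot invoke Theorem \ref{Thm:boundedFentropy} directly now, because for $x\neq 0$ the forward orbit satisfies $\|A^tx\|\to\infty$ and leaves $D_1$, so the forward limit set fails to be a finite subset of $D_1$. Instead I would follow the computation in the proof of Proposition \ref{Prop:feedback}: for each cell $B\in\mathcal{P}_\epsilon$ one has $F^{-t}(B)=A^{-t}B$, and since $A^{-t}\to 0$ every such preimage shrinks to the origin, so that $\lim_{t\to\infty}F^{-t}(\mathcal{P}_\epsilon)=\{0\}$. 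Because $\mu$ is Lebesgue measure, $\mu(\{0\})=0$, and substituting into (\ref{Equa:standentroy}) gives $h(F)=0$ precisely as in the positive-feedback case.

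The spectral facts $A^t\to 0$ and $A^{-t}\to 0$ are standard and require no comment. The only genuinely delicate point is the expanding case: one must notice that the hypotheses of Theorem \ref{Thm:boundedFentropy} fail because orbits escape $D_1$, so the argument has to pass to the \emph{backward} dynamics, using the invertibility of $A$ together with the contractivity of $A^{-1}$ to collapse the preimage partition onto the measure-zero origin. Establishing $\lim_{t\to\infty}F^{-t}(\mathcal{P}_\epsilon)=\{0\}$ in the sense needed for the entropy limit is the main obstacle, but it parallels the already-proven feedback statement and introduces no new ideas.
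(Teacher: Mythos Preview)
The paper does not supply a proof of Proposition~\ref{Prop:entropylinear}; it is stated and the text moves directly to the next subsection. Your argument is the natural one implied by the surrounding material: the contracting case is exactly an instance of Theorem~\ref{Thm:boundedFentropy}, and the expanding case is the discrete-time analogue of the pure-positive-feedback computation in Proposition~\ref{Prop:feedback}. Nothing more is needed, and your identification of the one subtlety---that in the expanding regime orbits leave $D_1$ so Theorem~\ref{Thm:boundedFentropy} is unavailable and one must pass to the backward dynamics via the invertibility of $A$---is precisely the point that distinguishes the two cases.
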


\subsection{Boltzmann entropy}
We will use Boltzmann entropy to measure the ability of a structure $F$ to generate order in a environment of random thermal motion. Boltzmann entropy is a quantity that describes the number of microscopic states corresponding to the macroscopic states of a closed thermal system, as defined below
\begin{equation}\label{Equa:Bolentropy}
  S=k\ln \Omega.
\end{equation}
Here $S$ is Boltzmann entropy, $k$ is Boltzmann constant, and $\Omega$ is the number of microscopic states in the equilibrium.

Intuitively, the higher for Boltzmann entropy, the more disordered for system. The second law of thermodynamics states that a closed thermal system of particles without interaction eventually approaches a state of maximum entropy.
If a specific interaction is added among particles, that is, a specific structure is given to the system, it will generate certain kind of order. So the Boltzmann entropy in the equilibrium state of new system decreases, that is, the microscopic number of macroscopic states is reduced, and the reduced amount can be used to measure the ability of the structure to generate order.

\section{Measure of Intelligence}
\subsection{Intelligence degree}
For a system, its intelligence can be measured by the functional entropy of its structure $F$ and the Boltzmann entropy in equilibrium state. Intuitively, the higher the functional entropy $h(F)$ is, the greater the macro output diversity of the structure is. The lower the Boltzmann entropy $B(F)$ in equilibrium state, the stronger the ability of system to generate order. An intelligent structure should have high macro output diversities and strong order-generating ability.
\begin{defn}\label{Def:intellmeasure}
Let $B_0$ be the maximum Boltzmann entropy of a given number of particles without interaction in a closed thermal system, and denote
\begin{equation}
 D(F)=h(F)\Big(1-\frac{B(F)}{B_0}\Big),
\end{equation}
then $D(F)$ can be regarded as the intelligence degree of structure $F$.
\end{defn}
Hence, if a structure $F$ possesses higher functional entropy $h(F)$ as well as lower Boltzmann entropy $B(F)$, it owns higher intelligence degree. It seems high functional entropy is contradictory to low Boltzmann entropy, however, later in the paper, we will show a structure with both higher functional entropy and low Boltzmann entropy.

\subsection{Intelligence degree of several typical structures}
Using the formula of intelligence degree defined above, we will examine the intelligence degree of several typical structures, where the standard functional entropy is chosen in (\ref{Equa:standentroy}).
\subsubsection{Closed thermal particle system}

Closed thermal particle system is a fundamental and classical physical system and also a basic research object of thermodynamics. An ideal system of particles with thermal motion can be considered as a random walk system with no interaction among particles. The second law of thermodynamics states that the system will eventually approach a state of maximum Boltzmann entropy $B_0$, i.e. $B(F)=B_0$, a weakest order-generating ability. Meanwhile, Proposition \ref{Prop:indenticalproj} shows that closed thermal particle system has the maximum functional entropy of 1, a strongest ability of function diversity.
According to the definition of intelligence degree (\ref{Def:intellmeasure}), the intelligence degree of closed thermal particle system is 0.
\subsubsection{Solid structure}

The immobile individuals in a system of solid structure can be described by a constant mapping $F_I(x)\equiv x_0, x\in  D_1$ where $x_0$ is a fixed point in $ D_1$. According to Propostion \ref{Prop:canstantproj}, the functional entropy of constant mapping is $h(F_I)=0$. And the Boltzmann entropy satisfies
$B (F_I) = 0$. Thus by (\ref{Def:intellmeasure}), we know $D(F_I)=0$. As the other extreme structure compared to closed thermal particle system, it indicates that solid structure has a strongest ability of generating order, but no possibility of function evolution, so its intelligence is 0. This also indicates that a stone has an intelligence degree of 0 under the definition (\ref{Def:intellmeasure}).

\subsubsection{The multi-agent system with fixed topological structures}
Consider a typical multi-agent system consisting of $n$ individuals
\begin{equation}\label{Model:multiagentmodel}
  x(t+1)=Ax(t),\quad x(0)\in  D_1,\quad t\geq 0,
\end{equation}
where $A\in\mathbb{R}^{n\times n}$ is a random matrix, i.e., all elements of $A$ are non-negative and the row sum is 1. $A$ represents the interaction between agents, that is, the topology of the system. $A$ can be either time-varying or time-invariant. Such topological relations are determined in advance, so we call such system structures fixed topological structures. Here, we just consider of $A$ which is time invariant.

For the standard functional entropy of the system (\ref{Model:multiagentmodel}), we consider two different structures of $A$. First, $A$ is irreducible, implying the topology of the system is strongly connected; Second, $A$ is reducible, implying that the topology of the system is not strongly connected.

\begin{lem}\label{Lem:stoindumatrix}
Consider the system (\ref{Model:multiagentmodel}) and denote $F_1(x)=Ax, x\in D_1$. If the stochastic matrix $A$ is irreducible, then the standard functional entropy of the system
\begin{equation}\label{Equa:topoentrmas}
  h(F_1)\in\Big(0,\frac{1}{n}\Big].
\end{equation}
\end{lem}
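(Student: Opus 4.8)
The plan is to reduce everything to the asymptotic behaviour of the iterates $A^t$ and then carry out a cell-counting argument. First I would invoke the Perron--Frobenius theorem: since $A$ is stochastic and irreducible (indeed primitive, so that the peripheral spectrum is the single eigenvalue $1$), the row-sum condition gives $A\mathbf 1=\mathbf 1$, the eigenvalue $1$ is simple and dominant, and $A^t\to\mathbf 1\pi^{\!\top}$ in operator norm, where $\pi$ is the strictly positive stationary distribution. Consequently $A^tx\to(\pi^{\!\top}x)\mathbf 1$ uniformly on the bounded set $D_1$, so the image $A^t(D_1)$ collapses onto the one-dimensional diagonal segment $L=\{c\mathbf 1:\ |c|\le\tfrac12\}$. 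The whole point is that the limiting dynamics is rank one, so the preimage partition $F_1^{-t}(\mathcal P_\epsilon)$ is asymptotically supported on the $\Theta(1/\epsilon)$ grid cubes meeting $L$, against the total $|\mathcal P_\epsilon|=(1/\epsilon)^n$ cubes.

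For the upper bound I would note that $F_1^{-t}(B)\neq\emptyset$ exactly when $B\cap A^t(D_1)\neq\emptyset$, so $|F_1^{-t}(\mathcal P_\epsilon)|$ equals the number of cubes meeting $A^t(D_1)$. Letting $t\to\infty$ and using that $A^t(D_1)\to L$ in Hausdorff distance, this count tends to $N(\epsilon)$, the number of $\epsilon$-cubes meeting the segment $L$; since $L$ has finite length, $N(\epsilon)\le C/\epsilon$. Then Lemma~\ref{Lem:avermaxi} together with (\ref{Equa:standentroy}) gives
\[
\lim_{t\to\infty}\frac{H\big(F_1^{-t}(\mathcal P_\epsilon)\big)}{\ln|\mathcal P_\epsilon|}\le\frac{\ln N(\epsilon)}{n\ln(1/\epsilon)}\le\frac{\ln(C/\epsilon)}{n\ln(1/\epsilon)}\xrightarrow[\epsilon\to0]{}\frac1n,
\]
which yields $h(F_1)\le 1/n$.

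For the positivity I would compute the cell measures in the limit. By the uniform convergence $A^tx\to(\pi^{\!\top}x)\mathbf 1$ and dominated convergence, $\mu\big(F_1^{-t}(B)\big)\to\mu\{x\in D_1:\ \pi^{\!\top}x\in I_B\}$, where $I_B\subset[-\tfrac12,\tfrac12]$ is the interval of diagonal coordinates landing in $B$; these are slabs of thickness $\le\epsilon$. Writing $\rho$ for the density of $\pi^{\!\top}x$ under $\mu$, the fact that some $\pi_i>0$ makes $\rho$ bounded, say $\rho\le M$, by conditioning on the remaining coordinates. Hence every atom of the limiting partition satisfies $p_B\le M\epsilon$, so in the limit $t\to\infty$,
\[
H\big(F_1^{-t}(\mathcal P_\epsilon)\big)=\sum_B p_B\ln\frac1{p_B}\ge\ln\frac{1}{M\epsilon}=\ln(1/\epsilon)-\ln M .
\]
Dividing by $\ln|\mathcal P_\epsilon|=n\ln(1/\epsilon)$ and letting $\epsilon\to0$ gives a limit $\ge 1/n>0$, so $h(F_1)>0$; combined with the upper bound this gives $h(F_1)\in(0,1/n]$ (in fact equality).

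The main obstacle is the rigorous interchange/convergence of the inner $t\to\infty$ limit at fixed $\epsilon$: I must justify that both the cell count and each cell measure of $F_1^{-t}(\mathcal P_\epsilon)$ genuinely converge to their collapsed-onto-$L$ values. This needs the operator-norm convergence $A^t\to\mathbf 1\pi^{\!\top}$ to control the transverse flattening uniformly, together with an estimate that cubes whose interiors are only grazed by $L$ contribute vanishing measure, so that neither the count nor the entropy is spoiled by cell fragmentation. A secondary point worth flagging is that the clean rank-one limit, and hence the bound $1/n$, relies on primitivity; for a merely irreducible but periodic $A$ the peripheral spectrum is larger and the image collapses onto a higher-dimensional flat, so for the stated range the hypothesis should be read as primitivity (e.g.\ the swap $A=\left(\begin{smallmatrix}0&1\\1&0\end{smallmatrix}\right)$ is irreducible yet gives $h=1$).
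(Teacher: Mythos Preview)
Your upper-bound argument is essentially the paper's: both use that the orbit collapses onto the diagonal $L=\{c\mathbf 1\}$, count the grid cubes meeting $L$ (the paper writes this as $|L\wedge\mathcal P_\epsilon|=|\mathcal P_\epsilon|^{1/n}$, you as $N(\epsilon)\le C/\epsilon$), and invoke Lemma~\ref{Lem:avermaxi} to bound the normalized entropy by $1/n$. Where you go beyond the paper is the lower bound: the paper's proof establishes only $h(F_1)\le 1/n$ and never argues $h(F_1)>0$, whereas your slab-measure estimate (each limiting atom has mass $\le M\epsilon$, hence $H\ge\ln(1/M\epsilon)$) actually pins down $h(F_1)=1/n$. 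Your flagged interchange issue is not serious here---for fixed $\epsilon$ the partition $\mathcal P_\epsilon$ is finite, $\mu(F_1^{-t}(B))\to\mu\{x:\pi^{\!\top}x\in I_B\}$ by dominated convergence (the boundary exceptional set is Lebesgue-null), and continuity of $p\mapsto -p\ln p$ finishes it. Your primitivity remark is on point and identifies a genuine gap in the paper's hypothesis: the paper cites \cite{Jad2003} to get consensus from mere irreducibility, but for a time-invariant stochastic $A$ consensus in fact requires primitivity, and your swap counterexample $A=\left(\begin{smallmatrix}0&1\\1&0\end{smallmatrix}\right)$ shows $h=1$ is possible under irreducibility alone.
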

\begin{proof}
Since $A$ is irreducible, then the topological graph of the system is strongly connected, and from \cite{Jad2003}, the system will reach consensus, namely
\begin{equation*}
  \lim_{t\rightarrow\infty}x(t)=c(x(0))\mathbf{1}.
\end{equation*}
Here, $c(x(0))$ is a constant vector depending on $x(0)$. Given $\epsilon>0$ and the uniform partition $\mathcal{P}_\epsilon$ in (\ref{Equa:standentroy}), let $L=\{x\in D_1|x_1=x_2=\ldots=x_n\}$ and denote
\begin{equation}\label{Equa:LcutPepsi}
  L\bigwedge \mathcal{P}_\epsilon=\{L\bigcap A|A\in\mathcal{P}_\epsilon\},
\end{equation}
it is not hard to show that
\begin{equation}\label{Equa:LPnonP}
  |L\bigwedge \mathcal{P}_\epsilon|=|\mathcal{P}_\epsilon|^{1/n}.
\end{equation}
Meanwhile, for any $t>0$ and the partition $\mathcal{P}$, we have
\begin{equation}\label{Equa:FnPumleqnP}
  |F_1^{-t}(\mathcal{P})|\leq |\mathcal{P}|.
\end{equation}
By Lemma \ref{Lem:avermaxi}, (\ref{Equa:LPnonP}), (\ref{Equa:FnPumleqnP}) and Definition \ref{Equa:standentroy}, we obtain
\begin{equation*}
\begin{split}
 h(F_1)= &\lim\limits_{\epsilon\rightarrow 0}\lim\limits_{t\rightarrow\infty}-\frac{1}{\ln|\mathcal{P}_\epsilon|}\sum\limits_{A\in F_1^{-t}(\mathcal{P}_\epsilon)}\mu(A)\ln\mu(A)\\
 \leq&\lim\limits_{\epsilon\rightarrow 0}\lim\limits_{t\rightarrow\infty}\frac{1}{\ln|\mathcal{P}_\epsilon|}\ln|F_1^{-t}(L\bigwedge \mathcal{P}_\epsilon)|\\
 \leq&\lim\limits_{\epsilon\rightarrow 0}\frac{1}{\ln|\mathcal{P}_\epsilon|}\ln|L\bigwedge \mathcal{P}_\epsilon|\\
 = &\frac{1}{n}.
 \end{split}
\end{equation*}
This completes the proof.
\end{proof}
Lemma \ref{Lem:stoindumatrix}
shows that the standard functional entropy of the structures with strongly connected fixed interaction does not exceed $\frac{1}{n}$ ($n$ is the agent number of system). So, the larger amount of population in a system with fixed connected structure, the lower its functional entropy may be. In other words, large population size is not beneficial to function evolution of a system with fixed connected interactions.

When $A$ is reducible, the topological graph of the structure $F$ is not strongly connected, and denoted as $F_1'$. Since $A$ is a stochastic matrix, we know that the standard functional entropy
\begin{equation}\label{Equa:linearfunentro}
  h(F_1)\leq h(F_1').
\end{equation}
This shows that the function diversity of strongly connected structure is not higher than that of unconnected topologies.

For the Boltzmann entropy of the system (\ref{Model:multiagentmodel}) in equilibrium, we also need consider the two connected structures of $A$.

Let $A$ act on a thermal system consisting of $n$ particles, and the system equation is
\begin{equation}\label{Model:noisymamodel}
  x(t+1)=Ax(t)+\xi(t+1),\quad x(t)\in \mathbb{R}^n,\quad t\geq 0,
\end{equation}
where $\xi(t), t\geq 1$ are i.i.d. random variables, and $\parallel\xi(t)\parallel\leq \delta, a.s., t\geq 0$, $\delta$ is arbitrary small positive real number.

Let the Boltzmann entropy in equilibrium of the system (\ref{Model:noisymamodel}) with a strongly connected topology be $B_1$, the corresponding number of microstates be $\Omega_1$, and the Boltzmann entropy of the non-strongly connected topology be $B_1'$, the corresponding number of microstates be $\Omega_1'$. By \cite{Wang2007}, the system with strongly connected topology will reach robust consensus, while the non-strongly connected structure cannot. So $\Omega_1< \Omega_1'\leq \Omega$ ($\Omega$ is the number of microstates in equilibrium for the same thermal particles without interaction), and therefore
\begin{equation}\label{Equa:linearBe}
  B_1< B_1'.
\end{equation}
This shows that the structure with strongly connected topology possesses stronger ability of generating order than the structure of non-strongly connected topology.

Let the intelligence degree of the strongly connected structure is $D_1$, and the intelligence degree of the non-strongly connected structure is $D_1'$, then from the definition of intelligence degree (\ref{Def:intellmeasure}) and (\ref{Equa:topoentrmas}),
\begin{equation}\label{Equa:linearId}
  D_1=h(F_1)\Big(1-\frac{B_1}{B_0}\Big)\in\Big(0,\frac{1}{n}\Big),\quad D_1'=h(F_1')\Big(1-\frac{B_1'}{B_0}\Big).
\end{equation}

\subsubsection{The self-organizing system with state-dependent local interaction}

In this section, we consider a self-organizing system whose structure is determined by state-dependent local interaction,
\begin{equation}\label{Model:HKmodel}
  x_i(t+1)=\frac{1}{|\mathcal{N}_i(t)|}\sum\limits_{j\in \mathcal{N}_i(t)}x_j(t),\quad i\in\mathcal{V}=\{1,\ldots,n\},\quad x(0)\in D_1
\end{equation}
where
\begin{equation}\label{Model:neigh}
 \mathcal{N}_i(t)=\{j\in\mathcal{V}\; \big|\; |x_j(t)-x_i(t)|\leq \rho\}
\end{equation}
is the neighbor set of agent $i$, $\rho\in[0,1]$ is the neighbor radius of agents.
The system (\ref{Model:HKmodel}) is called Hegselmann-Krause model \cite{Hegselmann2002} and, like the structure (\ref{Model:multiagentmodel}), is a structure based on averaging mechanism. This state-dependent locally interacting structure captures a self-organizing mechanism which exists ubiquitously in real complex systems, biological and neural networks for instance, and is included in many models, such as the classic Vicsek model. But quite different from the structure (\ref{Model:multiagentmodel}), the topology of system (\ref{Model:HKmodel}) is not predetermined though time-varying, and it heavily depends on the current state of system.

Now let's consider the standard functional entropy of system (\ref{Model:HKmodel}), denoted as $h(F_{HK})$. For system (\ref{Model:HKmodel}), the following conclusion holds:
\begin{prop}\label{Prop:HKfrag}
For every $1\leq i\leq n$ in the system (\ref{Model:HKmodel}), $x_i(t)$ will convergent to the static state $x_i^*$, and for each $i, j$, it holds $x_i^*=x_j^*$ or $|x_i^*-x_j^*|> \rho$.
\end{prop}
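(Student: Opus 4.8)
The plan is to exploit the one-dimensional, order-preserving, averaging nature of the Hegselmann--Krause dynamics, reducing the claim to two independent tasks: convergence of each coordinate $x_i(t)$, and the separation property of its limit. I regard each $x_i(t)$ as a scalar opinion in $[-\tfrac12,\tfrac12]$. First I would establish \emph{order preservation}: relabel the agents so that $x_1(0)\le\cdots\le x_n(0)$ and show by induction on $t$ that this order persists. In the sorted configuration each neighbour set $\mathcal N_i(t)=\{l_i,\dots,r_i\}$ is a contiguous block of indices, and $x_i(t)\le x_j(t)$ forces $l_i\le l_j$ and $r_i\le r_j$; thus the averaging window of $j$ is the window of $i$ slid to the right over a non-decreasing sequence, and sliding a window right cannot decrease its mean, so $x_i(t+1)\le x_j(t+1)$. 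In particular $x_1(t)$ is non-decreasing, $x_n(t)$ is non-increasing, and the diameter $x_n(t)-x_1(t)$ is non-increasing.

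Next I would prove a \emph{gap-persistence} lemma. Once two consecutive agents satisfy $x_{k+1}(t)-x_k(t)>\rho$, every cross pair $(a,b)$ with $a\le k<k+1\le b$ has $x_b-x_a>\rho$, so the blocks $\{1,\dots,k\}$ and $\{k+1,\dots,n\}$ interact with nobody across the gap and evolve independently; then $x_k$, the maximum of the left block, stays non-increasing, $x_{k+1}$, the minimum of the right block, stays non-decreasing, and the gap remains larger than $\rho$ forever. Since a consecutive gap can open but never close and there are only $n-1$ of them, the partition into maximal $\rho$-connected blocks stabilizes after finitely many steps, and within each final block every consecutive gap stays $\le\rho$, so the block induces a connected proximity graph at every later time.

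The \textbf{main obstacle} is convergence inside a persistently connected block. On such a block the update matrices $A(t)$, with entries $1/|\mathcal N_i(t)|$ on neighbours and $0$ elsewhere, are row-stochastic, have strictly positive diagonal, have all nonzero weights bounded below by $1/n$, and induce a connected graph at each step. For inhomogeneous products of stochastic matrices over repeatedly connected graphs with uniformly positive weights, the standard consensus results (of the type invoked in \cite{Jad2003}) force the block diameter to tend to $0$; together with the monotone, bounded behaviour of the block extremes from the first step, this makes every $x_i(t)$ in the block converge to a single common value, and hence $x_i(t)\to x_i^\ast$ for all $i$. The delicate point, which prevents one from merely quoting the fixed-topology theory, is that the topology here is state-dependent, so persistent connectivity and the uniform weight bound must be secured first.

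Finally I would read off the \emph{fragmentation} of the limit directly from the fixed-point relation $x_i^\ast=\frac{1}{|\mathcal N_i^\ast|}\sum_{j\in\mathcal N_i^\ast}x_j^\ast$, where $\mathcal N_i^\ast=\{j:|x_i^\ast-x_j^\ast|\le\rho\}$. Form the proximity graph on the limit values and take any connected cluster with minimal value $m$. If $x_i^\ast=m$, all neighbours of $i$ have value $\ge m$, so the equality $x_i^\ast=m$ expresses $m$ as an average of quantities $\ge m$, forcing every neighbour of $i$ to equal $m$; applying this at each newly found agent and using connectivity propagates $x^\ast\equiv m$ over the whole cluster. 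Thus each cluster is constant, so two limit values are either equal or lie in distinct clusters separated by more than $\rho$, which is exactly $x_i^\ast=x_j^\ast$ or $|x_i^\ast-x_j^\ast|>\rho$.
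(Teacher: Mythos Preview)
The paper does not actually prove Proposition~\ref{Prop:HKfrag}; it simply states the convergence and fragmentation property of the Hegselmann--Krause dynamics as a known fact (the model is cited from \cite{Hegselmann2002}) and immediately moves on to the two extreme cases $\max_{i,j}|x_i(0)-x_j(0)|\le\rho$ and $\min_{i,j}|x_i(0)-x_j(0)|>\rho$. Your proposal therefore supplies strictly more than the paper: a self-contained argument along the standard lines in the HK literature---order preservation in one dimension, persistence of gaps larger than $\rho$, finite-time stabilization of the block partition (since the number of open gaps is non-decreasing and bounded by $n-1$), consensus inside each terminal block via inhomogeneous products of row-stochastic matrices with positive diagonal, connected graph, and weights bounded below by $1/n$ (which is exactly the setting of \cite{Jad2003}), and finally the cluster structure of the limit from the fixed-point relation. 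This outline is correct. One small point worth tightening in your order-preservation step: the two neighbour windows $\{l_i,\dots,r_i\}$ and $\{l_j,\dots,r_j\}$ need not have the same cardinality, so ``sliding right'' is not quite the picture; the clean argument is that passing from $i$'s window to $j$'s window removes some of the smallest entries and appends some entries at least as large as the current maximum, and each of these operations weakly increases the average of a sorted list.
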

Obviously, when $\max_{i,j\in\mathcal{V}}|x_i(0)-x_j(0)|\leq \rho$, we have $x_1^*=x_2^*=\ldots=x_n^*=\frac{1}{n}\sum_{i\in\mathcal{V}} x_i(0)$, and when $\min_{i,j\in\mathcal{V}}|x_i(0)-x_j(0)|> \rho$, we have $x_i^*=x_i(0),i\in\mathcal{V}$.
\begin{prop}\label{Prop:funentrHK}
Given neighbor radius $\rho$, denote $h(F_{HK_\rho})$ as the standard functional entropy of the system (\ref{Model:HKmodel}), then \begin{equation*}
  h(F_{HK_1})=\frac{1}{n},\quad h(F_{HK_0})=1,
\end{equation*}
and
\begin{equation*}
  \frac{1}{n}< h(F_{HK_\rho})\leq 1, \qquad 0<\rho<1.
\end{equation*}
\end{prop}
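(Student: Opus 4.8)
The plan is to reduce everything to the geometry of the limiting configuration map. By Proposition \ref{Prop:HKfrag} every trajectory converges, so set $\Phi(x):=\lim_{t\to\infty}F_{HK}^{t}(x)=x^{*}$, and write $H_\epsilon(\nu):=-\sum_{B\in\mathcal{P}_\epsilon}\nu(B)\ln\nu(B)$ for the $\mathcal{P}_\epsilon$-entropy of a measure $\nu$. Because $\mu(F^{-t}(B))=\mu\{x:F^{t}(x)\in B\}\to\mu(\Phi^{-1}(B))$ for every cell $B$ whose boundary is $\Phi_{*}\mu$-null, the partitions $F^{-t}(\mathcal{P}_\epsilon)$ stabilize as $t\to\infty$ to the partition induced by the level sets of $\Phi$, so the inner limit of (\ref{Equa:standentroy}) becomes $H_\epsilon(\Phi_{*}\mu)$, the $\mathcal{P}_\epsilon$-entropy of the law of the final configuration under a uniform initial profile. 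Since $\ln|\mathcal{P}_\epsilon|=n\ln(1/\epsilon)$, I get
\begin{equation*}
 h(F_{HK_\rho})=\lim_{\epsilon\to0}\frac{H_\epsilon(\Phi_{*}\mu)}{n\ln(1/\epsilon)},
\end{equation*}
i.e.\ $h(F_{HK_\rho})$ is the information dimension of $\Phi_{*}\mu$ divided by $n$, and the proof is the evaluation of this dimension in the three regimes.

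For $\rho=1$, any two coordinates of a point of $D_1$ differ by at most $1=\rho$, so $\mathcal{N}_i(t)=\mathcal{V}$ for all $i,t$, the map is the averaging map $x\mapsto\bar x\mathbf 1$, and $\Phi$ carries $D_1$ onto the diagonal $L=\{x_1=\dots=x_n\}$. The counting identity $|L\bigwedge\mathcal{P}_\epsilon|=|\mathcal{P}_\epsilon|^{1/n}$ from the proof of Lemma \ref{Lem:stoindumatrix} bounds the number of cells carrying mass, giving $H_\epsilon(\Phi_{*}\mu)\le\ln(1/\epsilon)$ and hence $h\le 1/n$; since $\Phi_{*}\mu$ is the non-atomic law of $\bar x$ along $L$ (a genuinely one-dimensional measure with bounded density), $H_\epsilon(\Phi_{*}\mu)\ge\ln(1/\epsilon)-C$, which forces $h(F_{HK_1})=1/n$. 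For $\rho=0$ the neighbor relation is $x_j=x_i$, so for $\mu$-a.e.\ profile all coordinates are distinct, $\mathcal{N}_i(0)=\{i\}$, and $F_{HK}$ is the identity $\mu$-a.e.; Proposition \ref{Prop:indenticalproj} then gives $h(F_{HK_0})=1$.

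For $0<\rho<1$ the bound $h(F_{HK_\rho})\le1$ is immediate from Theorem \ref{Thm:funcentropy}. For the value I would stratify the limit set by the cluster pattern $\pi$: by Proposition \ref{Prop:HKfrag} a final state is determined by a finite label (which agents share a value) together with the $D_\pi$ distinct cluster values, whose mutual gaps exceed $\rho$. Conditioned on $\pi$, the push-forward $\Phi_{*}\mu$ is absolutely continuous with bounded density with respect to $D_\pi$-dimensional Lebesgue measure in the cluster values, so its information dimension is $D_\pi$. As distinct patterns occupy disjoint strata up to $\mu$-null overlaps, the disjoint-support mixture formula gives $H_\epsilon(\Phi_{*}\mu)=\sum_\pi c_\pi H_\epsilon(\nu_\pi)+O(1)$ and hence
\begin{equation*}
 h(F_{HK_\rho})=\frac1n\sum_\pi c_\pi D_\pi=\frac1n\,\mathbb{E}[D],
\end{equation*}
where $D\ge1$ is the random number of clusters. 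Because $\rho<1$, a positive-measure set of profiles splits into two groups confined to $[-\tfrac12,-\tfrac12+\delta]$ and $[\tfrac12-\delta,\tfrac12]$ with $\delta<\tfrac{1-\rho}{2}$; these stay more than $\rho$ apart, never interact, and yield $D\ge2$, so $c_2:=\Pr(D\ge2)>0$ and $\mathbb{E}[D]\ge\Pr(D=1)+2c_2=1+c_2>1$. This gives $h(F_{HK_\rho})=\tfrac1n\mathbb{E}[D]\in(\tfrac1n,1]$.

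The main obstacle is the stratification step for $0<\rho<1$: rigorously justifying that $h(F_{HK_\rho})=\tfrac1n\mathbb{E}[D]$ requires exchanging the $\epsilon$- and $t$-limits with the decomposition, and showing that on each stratum $\Phi_{*}\mu$ really is absolutely continuous with bounded density, so that its information dimension equals the cluster number $D_\pi$. This amounts to analyzing the piecewise-affine dependence of the cluster values on the initial profile and controlling the $\mu$-measure of the boundary set where the pattern changes and $\Phi$ is discontinuous; away from that set the map is smooth and the density bound is routine. I note, however, that the strict inequality $h>1/n$ does not need the exact value: isolating the one-dimensional consensus stratum (weight $\Pr(D=1)$, contributing $1$) together with the positive-measure two-cluster stratum (contributing at least $2c_2$) already forces $\mathbb{E}[D]\ge1+c_2>1$.
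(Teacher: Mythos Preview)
Your treatment of the extreme cases $\rho=0$ and $\rho=1$ coincides with the paper's: both invoke Proposition~\ref{Prop:indenticalproj} for the identity case and the diagonal counting identity from the proof of Lemma~\ref{Lem:stoindumatrix} for the full-averaging case (you add the matching lower bound via the bounded density of the law of $\bar x$, which the paper leaves implicit).

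For $0<\rho<1$ the two arguments diverge genuinely. The paper does not compute any dimension; it proceeds by a direct refinement comparison with the $\rho=1$ map. It argues that, as $\epsilon\to0$, each atom of $\lim_t F_{HK_\rho}^{-t}(\mathcal{P}_\epsilon)$ is contained in an atom of $\lim_t F_{HK_1}^{-t}(\mathcal{P}_\epsilon)$ (via the claim that initial profiles with different coordinate sums have different limits), observes that the refinement is strict since $|F_{HK_\rho}^{-t}(\mathcal{P}_\epsilon)|>|F_{HK_1}^{-t}(\mathcal{P}_\epsilon)|$, and then applies monotonicity of the entropy function under refinement to conclude $h(F_{HK_\rho})>h(F_{HK_1})=1/n$ in one stroke. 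Your route instead identifies $h(F_{HK_\rho})$ with $n^{-1}$ times the information dimension of the push-forward $\Phi_*\mu$, stratifies by cluster pattern, and extracts the formula $h=\tfrac1n\,\mathbb{E}[D]$. This is strictly more informative---it yields the exact value in terms of the expected number of limit clusters, not just the bare inequality---but at the cost of the technical work you correctly flag (absolute continuity with bounded density on each stratum, $\mu$-null pattern boundaries, exchange of the $\epsilon$- and $t$-limits with the stratification). The paper's comparison argument sidesteps all of that machinery; conversely, it gives no quantitative handle on $h(F_{HK_\rho})$ beyond $h>1/n$.
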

\begin{proof}
When $\rho=0$, there is no interaction between the particles of the system, then by Proposition \ref{Prop:indenticalproj}, we know $h(F_{HK_0})=1$. By equations of (\ref{Model:HKmodel}) and Proposition
\ref{Prop:HKfrag}, when $\rho= 1$, we have $x_i^*=\frac{x_1(0)+\ldots+x_n(0)}{n}, i\in\mathcal{V}$. Since $\mu$ is Lebesgue measure, then by the proof of Lemma \ref{Lem:stoindumatrix}, we obtain $h(F_{HK_1})=\frac{1}{n}$.

Now consider the case of $0<\rho_0<1$, and arbitrarily given $\epsilon>0$ and uniform net partition $\mathcal{P}_\epsilon$.
By properties of model (\ref{Model:HKmodel}), we can prove that, for two groups of initial states $x(0), y(0)$, if $\sum x_i(0)\neq \sum_i y(0)$, then $x^*\neq y^*$. Hence, for sufficiently small $\epsilon$, say, when $\epsilon\rightarrow 0$, there exists $A'\in L\bigvee \mathcal{P}_\epsilon$(refer to $(\ref{Equa:LcutPepsi})$), such that $\lim\limits_{t\rightarrow\infty}F_{HK_{\rho_0}}^{-t}(A)\subset \lim\limits_{t\rightarrow\infty}F_{HK_1}^{-t}(A')$ for any $A\in\mathcal{P}_\epsilon$. Moreover, $|F_{HK_{\rho_0}}^{-t}(\mathcal{P}_\epsilon)|>|F_{HK_{1}}^{-t}(\mathcal{P}_\epsilon)|$,
then by the subadditivity of entropy function \cite{Tomasz2011}, we have
\begin{equation*}
  -\sum\limits_{A\in F_{HK_{\rho_0}}^{-t}(\mathcal{P}_\epsilon)}\mu(A)\ln\mu(A)> -\sum\limits_{A'\in F_{HK_{1}}^{-t}(\mathcal{P}_\epsilon)}\mu(A')\ln\mu(A')
\end{equation*}
it then follows
\begin{equation*}
\begin{split}
 h(F_{HK_{\rho_0}})= &\lim\limits_{\epsilon\rightarrow 0}\lim\limits_{t\rightarrow\infty}-\frac{1}{\ln|\mathcal{P}_\epsilon|}\sum\limits_{A\in F_{HK_{\rho_0}}^{-t}(\mathcal{P}_\epsilon)}\mu(A)\ln\mu(A)\\
 > &\lim\limits_{\epsilon\rightarrow 0}\lim\limits_{t\rightarrow\infty}-\frac{1}{\ln|\mathcal{P}_\epsilon|}\sum\limits_{A'\in F_{HK_{1}}^{-t}(\mathcal{P}_\epsilon)}\mu(A')\ln\mu(A')\\
 =& h(F_{HK_1})=\frac{1}{n}.
 \end{split}
\end{equation*}
This completes the proof.
\end{proof}
Now consider the Boltzmann entropy of system (\ref{Model:HKmodel}) in equilibrium state. Let $F_{HK_\rho}$ act on a closed thermal system, and the system equation is
\begin{equation}\label{Model:HKmodelnoise}
  x_i(t+1)=\frac{1}{|\mathcal{N}_i(t)|}\sum\limits_{j\in \mathcal{N}_i(t)}x_j(t)+\xi_i(t+1), \quad i\in\mathcal{V}=\{1,\ldots,n\}
\end{equation}
where $|\xi_i(t)|\leq \delta\leq\rho/2, a.s., i\in\mathcal{V}, t\geq 1$, $\delta$ is an arbitrarily small positive real number and $0<\rho<1$.
Denote Boltzmann entropy of the system (\ref{Model:HKmodelnoise}) as $B_2$. From \cite{Su2017auto}, we know the system (\ref{Model:HKmodel}) will reach quasi-consensus. This indicates that the structure $F_{HK}$ possesses a same ability of generating order with the structure of strongly connected fixed topology $F_1$, i.e., $B_{HK_\rho}=B_1$. Meanwhile, by Proposition \ref{Prop:funentrHK}, we have $h(F_{HK_\rho})>\frac{1}{n}\geq h(F_1)$, then by Definition \ref{Def:intellmeasure}, we know the intelligence degree of system (\ref{Model:HKmodel}) satisfies
\begin{equation}\label{Equa:HKId}
  D(F_{HK_\rho})=h(F_{HK_\rho})\Big(1-\frac{B_{HK_\rho}}{B_0}\Big)>h(F_1)\Big(1-\frac{B_1}{B_0}\Big) =D(F_1).
\end{equation}
This implies that the intelligence degree of the structure with state-dependent local interaction is higher than that with strongly connected fixed interactions. Roughly speaking, $F_1$ possesses strong interaction among agents, and hence strong ability of generating order but weak function evolvability; while $F_{HK}$ possesses weaker interaction among agents, and hence higher function evolvability but parallel strong ability of generating order. Though the structures $F_1$ and $F_{HK_\rho}$ appears in existing multi-agent models with no special distinction, the results here show that there exists an essential difference between them.

\subsubsection{Feedback structure}

Consider the following system with pure feedback structure.
\begin{equation}\label{Model:negfeed}
\dot{x}(t)=Ax(t),\qquad \,x(0)\in  D_1
\end{equation}

If all the eigenvalues of $A$ satisfy $Re(\lambda_i(A))<0$, the system (\ref{Model:negfeed}) is called pure negative feedback structure, and denoted by $F_-$. By Proposition \ref{Prop:feedback}, we know the standard functional entropy $h(F_-)=0$. Next we check the Boltzmann entropy of the system in equilibrium state. Consider the following system with random drive,
\begin{equation}\label{Model:negfeedrand}
dy_t=Ay_t+\delta d D_t,\qquad \,x(0)\in  D_1
\end{equation}
where $B_t, t\geq 0$ is Brownian motion. By the theory of stochastic differential equations, we know the solution $y(t)$ of the system (\ref{Model:negfeedrand}) and the solution $x(t)$ of the system (\ref{Model:negfeed}) satisfy
\begin{equation*}
  \lim\limits_{\delta\rightarrow 0}\mathbb{E}(\max_{t\leq T}|y(t)-x(t)|)^2=0, \qquad \forall T>0.
\end{equation*}
Since $\lim_{t\rightarrow \infty}x(t)=0$, then when $\delta$ is sufficiently small, the Boltzmann entropy of the system in equilibrium state is also quite small. It means the structure of pure negative feedback possesses very strong ability to generate order. However, as the standard functional entropy of the system is 0, the system function lack evolvability. Then by Definition \ref{Def:intellmeasure}, we know the intelligence degree of the system (\ref{Model:negfeed}) with pure negative feedback structure is 0.

When all $Re(\lambda_i(a))>0$, the system is pure positive feedback structure, and denoted by $F_+$. Similarly, by Proposition \ref{Prop:feedback}, we know the standard functional entropy $h(F_+)=0$. By Definition \ref{Def:intellmeasure}, the intelligence degree of the system (\ref{Model:negfeed}) with pure positive feedback structure is 0.

The above conclusions show that the intelligence degree of both pure positive and pure negative feedback structures is 0. However, for the system composed of positive and negative feedback, such as Lorentz chaotic system, the standard functional entropy is positive. In other words, by adding positive feedback to negative feedback structure, the standard function entropy of system can be increased and the function diversity of system can be improved. But for chaotic system, its Boltzmann entropy is also very high, so its intelligence may be very weak. But it still indicates that to achieve intelligent systems, a combination of positive and negative feedback structures is necessary. And one can choose different positive and negative feedback combination modes, to improve the intelligence of system.

At last of the section, we summarize the intelligence degree of the above basic structures
\begin{table}[H]
\centering
\caption{Intelligence degree of the basic structures}
\scalebox{0.75}{
\begin{tabular}{|c|c|c|c|}\hline
Structures (F)&Functional entropy (h(F))& Boltzmann entropy (B(F)) & Intelligence degree (D(F)) \\\hline
No interaction & 1 & $B_0$ &0\\\hline
Solid & 0 & 0 &0\\\hline
Connected fixed topology & $h_1\in(0,\frac{1}{n}]$ & $B_1$ & $h_1(1-\frac{B_1}{B_0})$\\\hline
Self-organizing with local interaction & $h_2\in (\frac{1}{n},1]$ & $B_1$ & $h_2(1-\frac{B_1}{B_0})$\\\hline
Pure negative feedback & 0 & $B_N$ &0\\\hline
Pure positive feedback & 0 & $B_P$ &0\\\hline
\end{tabular}}
\end{table}
\section{Conclusions}

This paper discusses the measurement of intelligence of system structures, defines function entropy to measure function diversity of system, and uses standard function entropy and Boltzmann entropy to define the intelligence degree of a given structure. Then, the intelligence degree of several basic structures is examined. Under the definition, the intelligence degree of closed thermal particle system, solid structure and pure positive or negative feedback structure is 0, while the intelligence degree of the structure based on local interaction is higher than that of fixed strongly connected topology. At the same time, neither positive feedback nor negative feedback alone can produce intelligent structure, which implies only a structure combined with positive and negative feedbacks can produce intelligent behavior.

Since natural life is a self-organizing process of different materials under different interacting structures, the results in this paper provide possibilities of mimicking the evolution process of natural intelligent life, especially biological intelligence, through self-organizing integrations of different basic structures under specific environments. Also, we can expect to explore diverse intelligent constructions finally based on the results in this paper.

%
%
%
%
%
%
%
%

\end{document}